\documentclass{article} 
\usepackage{arxiv}
\usepackage{amsthm}

\usepackage{natbib}        % required for bibliography
\usepackage{hyperref}

\usepackage{graphicx}          % Include this line if your
\usepackage{tikz-cd}
\usepackage{subfig}

\usepackage{xcolor}  % Needed for colours - load before mdframed

\usepackage{amssymb}
\usepackage{amsmath}
\usepackage{mathdots}

\usepackage{mathtools}

\usepackage{tensor}
\usepackage{xifthen}
\usepackage{urwchancal}
\DeclareFontFamily{OT1}{pzc}{}
\DeclareFontShape{OT1}{pzc}{m}{it}{<-> s * [1.10] pzcmi7t}{}
\DeclareMathAlphabet{\mathpzc}{OT1}{pzc}{m}{it}

\usepackage{graphicx}
\usepackage{ifpdf}
\ifpdf
\usepackage{epstopdf}
\PrependGraphicsExtensions{.svg}
\fi
\newtheorem{theorem}{Theorem}[section]

\newtheorem{remark}[theorem]{Remark}

\providecommand{\R}{\mathbb{R}}
\providecommand{\SO}{\mathbf{SO}}

\providecommand{\SE}{\mathbf{SE}}
\providecommand{\grpG}{\mathbf{G}}

\providecommand{\gothg}{\mathfrak{g}}

\providecommand{\so}{\mathfrak{so}}
\providecommand{\se}{\mathfrak{se}}

\providecommand{\calB}{\mathcal{B}}

\providecommand{\calL}{\mathcal{L}}

\providecommand{\Id}{I} % identity of a matrix group.

\providecommand{\tR}{\mathrm{R}} % left multiplication

\DeclareMathOperator{\tr}{tr}

\DeclareMathOperator{\diag}{diag}

\DeclareMathOperator{\Ad}{Ad}
\DeclareMathOperator{\ad}{ad}
\DeclareFontEncoding{LS1}{}{}
\DeclareFontSubstitution{LS1}{stix}{m}{n}
\DeclareSymbolFont{stixletters}{LS1}{stix}{m}{it}
\DeclareMathAccent{\cev}{\mathord}{stixletters}{"91}
\DeclareMathAccent{\vec}{\mathord}{stixletters}{"92}
\DeclareMathAccent{\vecev}{\mathord}{stixletters}{"95}

\providecommand{\td}{\mathrm{d}}

\providecommand{\ddt}{\frac{\td}{\td t}}

\usepackage{accents}
\usepackage{mathtools}
\makeatletter
\providecommand{\scirc}{%
    \hbox{\fontfamily{\rmdefault}\fontsize{0.4\dimexpr(\f@size pt)}{0}\selectfont{\raisebox{-0.52ex}[0ex][-0.52ex]{$\circ$}}}}
\makeatother
\makeatletter
\providecommand{\ucirc}{%
    \hbox{\fontfamily{\rmdefault}\fontsize{0.4\dimexpr(\f@size pt)}{0}\selectfont{\raisebox{0.0ex}[0ex][-0.52ex]{$\circ$}}}}

\makeatother

\mathchardef\mhyphen="2D
\providecommand{\idx}[5][]{
\ifthenelse{\isempty{#1}}% nothing is in the superscript spot
{\tensor*[_{#4}^{#3}]{#2}{_{#5}}}% if condition: old \idx
{\tensor*[_{#4}^{#3}]{#2}{^{#1}_{#5}}}% else condition old \ids
}
\usepackage{url}

\newcommand{\pp}[2]{ \frac{\partial #1}{\partial #2}}

\newcommand{\inert}{\mathbb{I}}
\newcommand{\binert}{\bar{\inert}_t}

\newcommand{\publicationdetails}
{
\copyrightNoticeIFACAccepted{2026} % Use for IFAC papers
M. Hampsey, P. van Goor, R. Banavar and R. Mahony, ``Tracking Control for a Dynamic Model of an Underwater Submersible'', in Proceedings of IFAC World Congress, 2026.\\
\\
}
\newcommand{\publicationversion}
\begin{document}

\title{Tracking Control for a Dynamic Model of an Underwater Submersible}
\headertitle{Tracking Control for a Dynamic Model of an Underwater Submersible}

% Title, preferably not more than 10 words.

% \thanks[footnoteinfo]{\todo{Sponsor and financial support acknowledgment
% goes here. Paper titles should be written in uppercase and lowercase
% letters, not all uppercase.}}
\author{
\href{}{\includegraphics[scale=0.06]{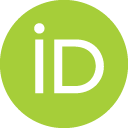}\hspace{1mm}
Matthew Hampsey}
\\
	Systems Theory and Robotics Group \\
	Australian National University \\
    ACT, 2601, Australia \\
	\texttt{matthew.hampsey@anu.edu.au} \\
	\And	\href{https://orcid.org/0000-0003-4391-7014}{\includegraphics[scale=0.06]{orcid.png}\hspace{1mm}
Pieter van Goor}
\\
	School of Aerospace, Mechanical,\\
 and Mechatronic Engineering,\\
 The University of Sydney, Australia.\\
    \texttt{pieter.vangoor@sydney.edu.au} \\
\And	\href{https://orcid.org/0000-0002-5746-7096}{\includegraphics[scale=0.06]{orcid.png}\hspace{1mm}
    Ravi Banavar}
\\
Centre for Systems and Control,\\
Indian Institute of Technology Bombay, India\\
\texttt{banavar@iitb.ac.in}
\And	\href{https://orcid.org/0000-0002-7803-2868}{\includegraphics[scale=0.06]{orcid.png}\hspace{1mm}
Robert Mahony}
\\
	Systems Theory and Robotics Group \\
	Australian National University \\
	ACT, 2601, Australia \\
	\texttt{robert.mahony@anu.edu.au} \\
}

\maketitle

\title{Tracking Control for a Dynamic Model of an Underwater Submersible\thanksref{footnoteinfo}}

\begin{abstract}                % Abstract of 50--100 words
Underwater vehicles are naturally modelled as rigid bodies on $\SE(3)$ subjected to added mass effects.
The passivity of the Hamiltonian structure of the system can be exploited to design energy-based stabilising controllers, however, the extension of these control designs to tracking control is not trivial since the 
error system for the classical error formulations is not itself Hamiltonian. 
In this paper, we show that a novel choice of error function leads to error dynamics that are Hamiltonian. 
We go on to derive an energy-based tracking control for a fully coupled model of a submersible vehicle. 
Asymptotic convergence of the control scheme is proved and the control is demonstrated in a simulation study of the Blue Robotics BlueROV2 Heavy submersible.
\end{abstract}

%===============================================================================

\section{Introduction}
The stabilisation of a dynamical system to a predefined trajectory is a fundamental task in control of mechanical systems.
For aerial robots like multirotors, the tracking task is simplified by the decoupling of rotational and translational inertias that allows the system to be modelled on $\SO(3) \times \R^3$ and allows attitude and position dynamics to be controlled in a cascaded manner \citep{leeGeometricTrackingControl2010}.
Submersible robots, on the other hand, are subject to added mass effects and gravity-induced moments due to the buoyancy and inertia of the surrounding volume of water \citep{leonardStabilityBottomheavyUnderwater1997}.
These dynamic effects couple the rotational and translational inertias and prevent a decoupled cascade control design. 

One classical approach to trajectory tracking for mechanical systems is that of computed torque \citep{slotine1991applied}. 
This approach uses feed-forward compensation (computed torque) to assign Euler-Lagrange dynamics to an error system expressed in generalised variables and then applies PD control to regulate the error. 
The design approach has been generalised to systems on Lie groups by a number of authors 
\citep{bulloGeometricControlMechanical2005,maithripalaAlmostglobalTrackingSimple2006,chandrasekaranGeometricPIDController2024}. 
Passivity-based control \citep{Ortega1998} exploits structure of the Euler-Lagrange system to shape the energy to achieve the desired stability properties of the error system while avoiding direct cancellation of nonlinear terms in the error system (present in computed torque control).  
A related technique applicable to port-Hamiltonian systems is interconnection and damping assignment passivity-based control (IDA-PBC) \citep{ortegaEnergyshapingPortcontrolledHamiltonian1999}, whereby the interconnection matrix, damping matrix and total energy of a port-Hamiltonian system are modified with feedback into a desired form.
\cite{blochControlledLagrangiansStabilization2000} proposed the method of controlled Lagrangians to modify the kinetic energy of an Euler-Lagrange system.
This approach was recently applied to the stabilisation of the submersible by \cite{contrerasControlledLagrangiansStabilization2022}. 
A limitation of all these energy based control approaches is that only state regulation or stabilisation to constant energy trajectories was achievable within the passive framework.  
\cite{fujimotoTrajectoryTrackingControl2003} considered general trajectory tracking for port-Hamiltonian systems by constructing a (generalised) canonical transformation, often found by solving an associated partial differential equation, that allows the definition of an error system that is itself Hamiltonian.
\cite{mahonyNovelPassivitybasedTrajectory2019} included the energy of the desired trajectory in the stabilisation control design and regulated the zero dynamics of the error variable. 

A parallel research theme in the literature developed tools to exploit the Lie group structure of mechanical systems with symmetry for observer and filter design 
\citep{mahonyNonlinearComplementaryFilters2008,bonnabelSymmetrypreservingObservers2008,barrauInvariantExtendedKalman2017}. 
Observer design is closely related to tracking control and there are a number of recent works that exploit this connection \citep{hampseyExploitingEquivarianceDesign2024,hampseySpatialGroupError2024,hampseyTrackingControlHomogeneous2023,weldeAlmostGlobalAsymptotic2024}.
In \cite{hampseyEquivariantTrackingControl2025}, the authors used a semi-direct group structure on the phase space $\grpG \times \gothg^\ast$ and showed that the Lie-Poisson equations exhibit equivariant symmetry. 
This allows the application of equivariant design techniques to dynamical mechanical systems with symmetry and the authors applied this to a trajectory tracking problem on $\SO(3)$ \citep{hampseyEquivariantTrackingControl2025} where the error system was shown to be (time varying) Lie-Poisson.

In this paper, we propose a tracking control scheme for fully-actuated underwater submersibles with coincident centres of pressure and mass.
This extends the example systems proposed in \cite{hampseyEquivariantTrackingControl2025} to systems with coupled inertias and serves as a practical guide on the use of the proposed constructions.
We review classical $\SE(3)$ Lie group theory in \S\ref{sec:prelims} and go on to present the dynamics of submersibles on $\SE(3)$, including added mass effects, in \S\ref{sec:dynamics_sub}. 
In \S\ref{sec:lie_group_error}, we formulate the semi-direct Lie group product on the phase space $\SE(3) \times \R^6$ and use this to derive an equivariant error for tracking control.
The error system is shown to be a (time-varying) Lie-Poisson system and this allows us to go on to propose a geometrically motivated PD tracking control for the underwater submersible in \S\ref{sec:tracking_control}.
We prove asymptotic tracking by showing that the equivariant error is stable using a passivity based argument. 
This proof simplifies the general proof proposed in \cite{hampseyEquivariantTrackingControl2025} by using the compactness of the $\SO(3)$ component of $\SE(3)$. 
Finally, in \S\ref{sec:simulation} and \S\ref{sec:discussion}, we simulate the Blue Robotics BlueROV2 Heavy submersible tracking a realistic trajectory, verifying the control performance.

\section{preliminaries}
\label{sec:prelims}

The Lie group $\SE(3)$ is the matrix Lie group defined by
\begin{align*}
    \SE(3) = \left\{ \begin{pmatrix}
        R & x \\ 0 & 1
    \end{pmatrix} \in \R^{4 \times 4}: R \in \SO(3), x \in \R^3 \right\},
\end{align*}
where $\SO(3)$ is the rotation group:
\begin{align*}
    \SO(3) = \left\{ R \in \R^{3 \times 3} : R^\top R = \Id, \det(R) = 1 \right\}.
\end{align*}
Note that $\SE(3) \cong \SO(3) \times \R^3$ as manifolds, and so for conciseness we will often write $(R, x) \in \SE(3)$ unambiguously.
The associated Lie algebra $\se(3)$ is the vector space
\begin{align*}
    \se(3) = \left\{ \begin{pmatrix} \omega^\times & v \\ 0 & 0 \end{pmatrix} \in \R^{4\times 4} : \omega^\times \in \so(3), v \in \R^3 \right\},
\end{align*}
equipped with a Lie bracket given by the matrix commutator, and where $\so(3)$ is the set of real $3 \times 3$ skew-symmetric matrices
\begin{align*}
    \so(3) = \left\{ U \in \R^{3 \times 3} : U + U^\top = 0 \right\}.
\end{align*}

Let $(\omega, v) \in \R^6$, then the wedge operator
\begin{align*}
(\omega, v)^\wedge = \begin{pmatrix} \omega^\times & v \\ 0 & 0 \end{pmatrix} \in \se(3)
\end{align*}
identifies $\R^6$ with $\se(3)$, where
\begin{align*}
    \omega^\times
    = \begin{pmatrix}
        \omega_1 \\ \omega_2 \\ \omega_3
    \end{pmatrix}^\times
    = \begin{pmatrix}
        0 & -\omega_3 & \omega_2 \\
        \omega_3 & 0 & -\omega_1 \\
        -\omega_2 & \omega_1 & 0
    \end{pmatrix} \in \so(3).
\end{align*}
Equipping $\R^6$ with the Lie bracket $[\cdot, \cdot]: \R^6 \times \R^6 \to \R^6$ defined by
\begin{align*}
    \left[ \begin{pmatrix}
    \omega_1 \\ v_1
    \end{pmatrix}, \begin{pmatrix}
    \omega_2 \\ v_2
    \end{pmatrix} \right] = \begin{pmatrix}
    \omega_1 \times \omega_2 \\ \omega_1 \times v_2 - \omega_2 \times v_1
    \end{pmatrix}
\end{align*}
makes this a Lie algebra isomorphism.
The dual algebra $\se(3)^\ast$ (that is, the set of linear functionals acting on $\se(3)$) may also be identified with $\R^6$ via the standard inner product on $\R^6$:
an element $P \in \se(3)^\ast$ is identified with $(\pi,p) \in \R^6$ by $\langle P, u^\wedge \rangle := \begin{psmallmatrix} \pi \\ p \end{psmallmatrix}^\top u$ for all $u \in \R^6$.
Note that $\langle P, u^\wedge \rangle \not= \tr( P^\top u^\wedge)$ in matrix coordinates.

The big `A' Adjoint $\Ad$ and little 'a' adjoint $\ad$ maps are classical operators defined on the Lie algebra of a Lie group \citep{leeSmoothManifolds2003}. 
Since we write the momentum as a vector in $\R^6$ we will work with the matrix representations of the maps $\Ad^\vee : \SE(3) \times \R^6 \to \R^6$, \hbox{$\ad^\vee : \R^6 \times \R^6 \to \R^6$} \citep{leeSmoothManifolds2003} and their dual maps are given by:
\begin{align*}
    \Ad^\vee_{(R, x)}             & = \begin{pmatrix*}
        R & 0 \\ x^\times R & R
    \end{pmatrix*},  &
    \Ad^{\ast \vee}_{(R, x)}      & =  \begin{pmatrix*}
        R^\top & -R^\top x^\times \\ 0 & R^\top
    \end{pmatrix*}, \\
    \ad^\vee_{(\omega, v)}        & = \begin{pmatrix*}
        \omega^\times & 0 \\ v^\times & \omega^\times
    \end{pmatrix*}, &
    \ad^{\ast \vee}_{(\omega, v)} & = \begin{pmatrix*}
        -\omega^\times & -v^\times \\ 0 & -\omega^\times
    \end{pmatrix*}.
\end{align*}
For $X \in \SE(3)$, $\dot{X} = X U$, the following identities hold:
\begin{align}
    \ddt \Ad^\vee_{X} &= \Ad^\vee_{X}\ad^\vee_{U^\vee} = \ad^\vee_{\Ad^\vee_{X} U^\vee} \Ad^\vee_{X},\\
    \ddt \Ad^{\ast \vee}_{X^{-1}} &= -\Ad^{\ast \vee}_{X^{-1}} \ad^{\ast \vee}_{U} 
    = -\ad^{\ast \vee}_{\Ad^\vee_{X} U^\vee} \Ad^{\ast \vee}_{X^{-1}}. \label{eq:ad_ids}
\end{align}

The matrix $\so(3)$ projection operator $\mathbb{P}_{\so(3)} : \R^{3 \times 3} \to \so(3)$ is defined by
\begin{align}
    \mathbb{P}_{\so(3)}(A) = \frac{1}{2} (A - A^\top) \label{eq:so3_proj}.
\end{align} 
One readily computes that $\tr(B v^\times) = 0$ for any symmetric $B = B^\top \in \R^{3\times 3}$ and that $\tr(u^\times v^\times) = -2 u^\top v$.
Thus, for arbitrary $A \in \R^{3 \times 3}$, one has 
\begin{align}
    \tr(A v^\times) = \tr(\mathbb{P}_{\so(3)}(A) v^\times) = \left[(A - A^\top)^\vee \right]^\top v \label{eq:trace_id},
\end{align} 
where $(\cdot)^\vee : \so(3) \to \R^3$ is the inverse of the $(\cdot)^\times$ operator.

\section{Dynamical model for submersible.}
\label{sec:dynamics_sub}

The modelling approach follows \cite[Section 2.3]{leonardStabilityBottomheavyUnderwater1997}.
We assume that the submersible vehicle considered is fully actuated and designed to be neutrally buoyant (that is, that the forces due to buoyancy and gravity are equal).
Additionally, we assume that the vehicle has been designed so that the centre of buoyancy and the centre of mass coincide. 
The vehicle is modelled as a rigid body subjected to added mass effects from the surrounding fluid that will manifest as additive inertia terms.
In this case, the Lagrangian governing rigid body motion is given by the kinetic energy and may be written as
\begin{align*}
K(\omega, v) =\frac{1}{2} (\omega, v) ^\top \inert (\omega, v),
\end{align*}
where $v$ is the translational (body-frame) velocity, $\omega$ is the rotational (body-frame) velocity and $\inert$ is the coupled moment of inertia.
The inertia $\inert$ is given by
\begin{align*}
\inert &= \begin{pmatrix*}
J & D \\ D^\top  & M
\end{pmatrix*},
\end{align*}
where $M = m \Id + M_{\text{AM}}$ is the sum of the actual and added mass, $J = J_b + J_{AM}$ is the sum of the actual and added rotational inertias, and $D$ is a coupling term that models added mass effects that couple rotational and linear motion.
In components, one defines the angular and linear momentum, respectively, as
\begin{align*}
\pi & \coloneqq J \omega + D v, & p & \coloneqq  Mv + D^\top \omega.
\end{align*}
Define $V = (\omega, v)$ and $P = \inert V$.
Then $K(\omega, v) = K(V)$ can be written as $h(P) = \frac{1}{2} P^\top \inert^{-1} P$.

Then the controlled Lie-Poisson dynamics \citep{marsden1994introduction} on $\se(3)^\ast \cong \R^6$ are given by
\begin{align*}
\begin{pmatrix}
\dot{\pi} \\ \dot{p}
\end{pmatrix} &= \ad^{\ast \vee}_{(\omega, v)}
\begin{pmatrix}
\pi \\ p
\end{pmatrix} + \begin{pmatrix} \tau \\ f \end{pmatrix}\\
&= \begin{pmatrix} -\omega^\times & -v^\times \\ 0 &  -\omega^\times \end{pmatrix} \begin{pmatrix}
\pi \\ p
\end{pmatrix} + \begin{pmatrix} \tau \\ f \end{pmatrix}.
\end{align*}
Writing these in vector form along with the kinematic equations yields
\begin{align}
    \dot{R}   & = R \omega^\times, &\dot{\pi} & = - \omega \times \pi - v \times p + \tau \notag\\
    \dot{x}   & = R v, & \dot{p}   & = - \omega \times p + f \label{eq:se3_dynamics},
\end{align}
which are the Kirchhoff equations of motion for a rigid body in an ideal fluid \citep{kirchhoff1883vorlesungen}.

\section{Lie Group Error and Dynamics}
\label{sec:lie_group_error}

The tracking approach of this paper relies on placing a Lie group structure on the direct product $\SE(3) \times \R^6 \cong \SE(3) \times \so(3)^\ast$ and then formulating the right-invariant error. 
Let $Q \coloneqq (R, x) \in \SE(3)$ denote the configuration and $P \coloneqq (\pi, p) \in \R^6$ denote the total momentum.
The $\SE(3) \ltimes \R^6$ cotangent group structure \citep{marsdenSemidirectProductsReduction1984,hampseyEquivariantTrackingControl2025} is given by the group multiplication
\begin{align}
    (Q_1, P_2)(Q_2, P_2) := (Q_1 Q_2, \Ad_{Q_2}^{\ast \vee} P_1 + P_2). \label{eq:group_mul}
\end{align}
In components, this product takes the form
\begin{align*}
((R_1, x_1), (\pi_1,  p_1)) & ((R_2, x_2),(\pi_2,p_2))\\
& = ((R_1 R_2, R_1 x_2 + x_1), \\
&\qquad (R_2^\top (\pi_1 - x_2 ^\times p_1) + \pi_2, R_2^\top p_1 + p_2)).
\end{align*}
The inverse is given by
\begin{align*}
    (Q, P)^{-1} := (Q^{-1}, -\Ad_{Q^{-1}}^{\ast \vee} P).
\end{align*}
In components one has 
\begin{align*}
((R, x), (\pi, p))^{-1} = ((R^\top, -R^\top x), (-R \pi - x \times R  p, -R p )).
\end{align*}

Consider a desired trajectory 
\begin{align*}
(Q_d, P_d) = ((R_d, x_d), (\pi_d, p_d)) : \R \to \SE(3) \times \R^3,    
\end{align*}
that satisfies the dynamics \eqref{eq:se3_dynamics}.
Utilising the group structure \eqref{eq:group_mul} and the right-invariant error definition leads to 
\begin{align}
    (Q_E, P_E) &\coloneqq (Q, P) (Q_d, P_d^{-1}) \notag\\
               &= (Q Q^{-1}_d, \Ad^{\ast \vee}_{Q^{-1}_d}(P - P_d) ) \label{eq:err_def}.
\end{align}
In components one has 
\begin{align}
R_E &= R R^\top_d, & & \pi_E = R_d (\pi - \pi_d) + x_d \times R_d (p - p_d) \notag\\
x_E &= x - R_E x_d , & & p_E = R_d(p - p_d) \label{eq:err_states}.
\end{align}
    
The choice of the right-invariant error is motivated by the error kinematics.
Letting $U \coloneqq (\omega, v)$, compute
\begin{align*}
    \dot{Q}_E &= \dot{Q} Q^{-1}_d - Q Q^{-1}_d \dot{Q}_d Q^{-1}_d = Q(U - U_d)^\wedge Q^{-1}_d\\
              &= Q_E \Ad_{Q_d} (U - U_d)^\wedge = Q_E U_E,
\end{align*}
where $U_E \coloneqq \Ad^\vee_{Q_d} (U - U_d)$.
Importantly, note that because $P_E \coloneqq \Ad^{\ast \vee}_{Q^{-1}_d}(P - P_d)$, then $P_E = \Ad^{\ast \vee}_{Q^{-1}_d} \inert \Ad^{\vee}_{Q^{-1}_d} U_E$ and so $P_E$ and $U_E$ are canonically conjugate with respect to the ``error Hamiltonian'' 
\begin{align}
    H_E \coloneqq \frac{1}{2} P_E^\top \binert^{-1} P_E  = \frac{1}{2} P_E^\top U_E\label{eq:error_hamiltonian},
\end{align}
where 
\begin{align}
    \binert \coloneqq \Ad^{\ast \vee}_{Q^{-1}_d} \inert \Ad^{\vee}_{Q^{-1}_d}. \label{eq:binert_definition}
\end{align}
That is, $U_E = \pp{H_E}{P_E}$.
This construction motivates the following definitions:
\begin{align}
\omega_E &\coloneqq R_d (\omega - \omega_d), \notag\\
v_E &\coloneqq x_d \times \omega_E + R_d (v - v_d),\notag\\
\tau_E &\coloneqq - \omega_E \times R_d \pi_d - (R_d (v - v_d)) \times R_d p_d + R_d (\tau - \tau_d),\notag\\
f_E &\coloneqq -\omega_E \times R_d p_d + R_d (f - f_d). \label{eq:err_inputs}
\end{align}

Then, from Equation \eqref{eq:err_def}, the dynamics of the error momentum $P_E$ are given by
\begin{align}
\dot{P}_E &= (\ddt \Ad^{\ast \vee}_{Q^{-1}_d})(P - P_d) + \Ad^{\ast \vee}_{Q^{-1}_d}(\dot{P} - \dot{P}_d)\notag \\
&= (-\ad^{\ast \vee}_{\Ad^{\vee}_{Q_d(t) U_d(t)}} \Ad^{\ast \vee}_{Q_d^{-1}(t)})(P - P_d)\notag \\
&\qquad  + \Ad^{\ast \vee}_{Q^{-1}_d}(\ad^{\ast \vee}_U P + \tau - \ad^{\ast \vee}_{U_d} P_d - \tau_d)\notag \\
&\qquad + \Ad^{\ast \vee}_{Q^{-1}_d}(\tau - \tau_d)\notag \\
&=  \ad^{\ast \vee}_{\Ad^{\vee}_{Q_d} (U - U_d)} \Ad^\ast_{Q^{-1}_d} P + \Ad^{\ast \vee}_{Q^{-1}_d}(\tau - \tau_d)\notag \\
&=  \ad^{\ast \vee}_{U_E} P_E + \ad^{\ast \vee}_{U_E} \Ad^{\ast \vee}_{Q^{-1}_d} P_d + \Ad^{\ast \vee}_{Q^{-1}_d}(\tau - \tau_d)\notag\\
&=  \ad^{\ast \vee}_{U_E} P_E + \tau_E \label{eq:P_E_dot},
\end{align}
where $\tau_E$ is defined by 
\begin{align*}
    \tau_E \coloneqq \ad^{\ast \vee}_{U_E} \Ad^{\ast \vee}_{Q^{-1}_d} P_d + \Ad^{\ast \vee}_{Q^{-1}_d}(\tau - \tau_d).
\end{align*}
Note that \eqref{eq:P_E_dot} is the Euler-Poincare dynamics in error coordinates and the error system is Hamiltonian. 
In components, these dynamics are given by:
\begin{align}
    \dot{R}_E &= R_E \omega_E^\times & &     \dot{\pi}_E = - \omega_E \times \pi_E  - v_E \times p_E + \tau_E \notag\\
    \dot{x}_E &= R_E v_E & &\dot{p}_E = -\omega_E \times p_E + f_E \label{eq:err_dynamics}.
\end{align}

The canonical relationship $U_E = \pp{H_E}{P_E}$, that underlies the Hamiltonian structure of the error dynamics, does come at the cost that the new inertia $\binert$ \eqref{eq:error_hamiltonian} is time varying. 
It is through this structure that the energy supplied to the error system through the evolution of the general desired trajectory $(Q_d, P_d)$ is modelled in the Hamiltonian error system.

\section{Tracking Control}
\label{sec:tracking_control}

In this section, we propose a control law that stabilises the error system to identity.
The design is based on applying energy shaping and damping injection in the Hamiltonian error coordinates. 

\begin{theorem}
Let $\xi_d \coloneqq ((R_d, x_d), (\pi_d, p_d))  : \R \to \SE(3) \times \R^6$ be a desired trajectory with input $(\tau_d, f_d): \R \to \R^6$, and let $\xi \coloneqq ((R, x), (\pi, p)) : \R \to \SE(3) \times \R^6$ be a system trajectory.
Both trajectories satisfy the dynamics \eqref{eq:se3_dynamics}.
The error $\xi_E \coloneqq ((R_E, x_E), (\pi_E, p_E))$ is given by \eqref{eq:err_states}.   
Let $K^R_x \in \R^{3 \times 3}$ be a diagonal symmetric matrix whose eigenvalues satisfy\footnote{The eigenvalue condition ensures the Hessian of $\tr(K^R_x (\Id - R_E))$ is positive definite on $\SO(3)$.}
 $\lambda_i + \lambda_j > 0$ for $i \neq j \in \{1, 2, 3\}$, and let $K^R_d, K^x_p$ and $K^x_d \in \R^{3\times3}$ be symmetric positive-definite matrices.
Let the inputs $\tau, f$ be given by
\begin{subequations}
\begin{align}
    \tau &:= \tau_d + \tilde{\omega} \times \pi_d + \omega_d \times \tilde{\pi} + \tilde{v} \times p_d + v_d \times \tilde{p} \notag\\
    & \quad\quad - R_d^\top K^R_d \omega_E - R_d^\top \mathbb{P}_{\so(3)}\left(K^R_p R_E\right)^\vee \notag\\
    & \quad\quad\quad\quad + (R_d^\top x_d) \times (R_d^\top K^x_d v_E + R^\top K^x_p x_E), \label{eq:tau_input}\\
    f &:= f_d + \tilde{\omega} \times p_d + \omega_d \times \tilde{p} - R_d^\top K^x_d v_E - R^\top K^x_p x_E \label{eq:f_input}.   
\end{align}
\end{subequations}
Then $\xi_E \to ((\Id, 0), (0, 0))$ asymptotically.
Furthermore, if $x_d$ is bounded then $\xi \to \xi_d$ asymptotically.
\end{theorem}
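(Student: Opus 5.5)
We propose a Lyapunov/passivity argument carried out entirely in the Hamiltonian error coordinates \eqref{eq:err_dynamics}. The candidate storage function is
\begin{align*}
    \mathcal{L} \coloneqq H_E + \tr\left(K^R_p(\Id - R_E)\right) + \frac{1}{2} x_E^\top K^x_p x_E ,
\end{align*}
with $H_E = \frac{1}{2}P_E^\top \binert^{-1} P_E$ as in \eqref{eq:error_hamiltonian}. (We read the matrix $K^R_x$ in the statement as the gain $K^R_p$ that appears in \eqref{eq:tau_input}.) By the eigenvalue condition, $\tr(K^R_p(\Id - R_E))$ is nonnegative with unique minimum at $R_E = \Id$ among the critical points that are local minima. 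Since $\inert > 0$ and $\Ad^\vee_{Q_d^{-1}}$ is invertible, $\binert$ is positive definite. Hence $\mathcal{L}$ is nonnegative and vanishes only at the identity of the error group.

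The first step is to substitute \eqref{eq:tau_input}--\eqref{eq:f_input} into the definitions of $\tau_E, f_E$ in \eqref{eq:err_inputs}. The intent is to check that the feedforward terms $\tilde\omega\times\pi_d+\omega_d\times\tilde\pi+\dots$ exactly cancel the coupling terms $\ad^{\ast\vee}_{U_E}\Ad^{\ast\vee}_{Q_d^{-1}}P_d$ together with the $\ad^{\ast}_{U_d}$ cross terms. What should remain is
\begin{align*}
    \tau_E &= -K^R_d\omega_E - \mathbb{P}_{\so(3)}(K^R_pR_E)^\vee + x_d\times(\cdot), &
    f_E &= -K^x_d v_E - R_E^\top K^x_p x_E .
\end{align*}
The $x_d\times(\cdot)$ term in $\tau$ is designed so that the total input satisfies
\begin{align*}
    (\tau_E,f_E) = -\Ad^{\ast\vee}_{Q_d^{-1}}\,\mathrm{blkdiag}(\cdot)\,\Ad^\vee_{Q_d^{-1}}U_E - \nabla\text{potential},
\end{align*}
that is, so that the damping is dissipative in the $U_E$ pairing. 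This bookkeeping (the notations $\tilde\omega$, $\tilde v$, etc.\ for differences and the frames in which $K^x_d$ acts) is the most error-prone part, and I would verify it componentwise using \eqref{eq:err_states}.

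Next we differentiate $\mathcal{L}$. For the kinetic part, $\ddt H_E = U_E^\top \dot P_E - \frac{1}{2}U_E^\top \dot{\binert}\, U_E$. Skew-symmetry of $\ad^\ast$ kills $U_E^\top\ad^{\ast\vee}_{U_E}P_E$. The main obstacle is the term $\dot{\binert}$, which need not be sign definite. I expect that computing it via \eqref{eq:ad_ids} gives $\dot{\binert} = -\ad^{\ast\vee}_{W}\binert - \binert\ad^\vee_{W}$ with $W = \Ad^\vee_{Q_d}U_d$. The resulting term $U_E^\top \binert\,\ad_W U_E$ is then bounded by $c\|U_E\|^2$ only if $U_d$ and $x_d$ are bounded. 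I would therefore first try to show that this term is exactly compensated by the feedforward terms, i.e.\ that the energy input from the desired trajectory is cancelled. Failing that, I would assume that the damping gains dominate it.

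The potential terms use \eqref{eq:trace_id}:
\begin{align*}
    \ddt\tr(K^R_p(\Id-R_E)) = -\tr(K^R_pR_E\omega_E^\times) = \omega_E^\top\,\mathbb{P}_{\so(3)}(K^R_pR_E)^\vee ,
\end{align*}
up to a sign which we fix. Likewise $\ddt\frac{1}{2}x_E^\top K^x_px_E = v_E^\top R_E^\top K^x_p x_E$. These cancel against the proportional parts of $\tau_E, f_E$. The result is $\dot{\mathcal{L}} = -\omega_E^\top K^R_d\omega_E - v_E^\top K^x_dv_E \le 0$, possibly with modified positive definite gains.

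Boundedness then follows. Compactness of $\SO(3)$ bounds $R_E$, and $\mathcal{L}$ bounds $x_E$ and $P_E$; with $\binert$ uniformly bounded this also bounds $U_E$. Barbalat's lemma then gives $U_E\to0$. Since the time-varying error system is only uniformly continuous, we apply Barbalat again to $\dot P_E$ to obtain $\tau_E, f_E \to 0$. This forces $\mathbb{P}_{\so(3)}(K^R_pR_E)\to0$ and $x_E\to0$. The eigenvalue condition leaves $R_E=\Id$ as the only stable critical point, the others being unstable saddles, giving almost-global convergence. Finally, \eqref{eq:err_states} with $x_d$ bounded inverts to give $R\to R_d$, $x\to x_d$, $P\to P_d$.
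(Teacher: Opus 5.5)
\textbf{Gap: the cancellation of the $\dot{\binert}$ term, the core of the proof, is never established.} You correctly identify $-\tfrac12U_E^\top\dot{\binert}U_E=P_E^\top\ad^\vee_WU_E$, with $W=\Ad^\vee_{Q_d}U_d$, as the obstacle, but you never show it is cancelled. Your first step actually predicts something incompatible with the mechanism that cancels it. If the feedforward terms really reduced $(\tau_E,f_E)$ to pure PD terms, this indefinite term would survive in $\dot{\mathcal L}$.

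What actually happens is the following. The feedforward part of \eqref{eq:tau_input}--\eqref{eq:f_input} is exactly $-\ad^{\ast\vee}_{U-U_d}P_d-\ad^{\ast\vee}_{U_d}(P-P_d)$. Pushing it through $\Ad^{\ast\vee}_{Q_d^{-1}}$ with the identity in \eqref{eq:ad_ids} gives $-\ad^{\ast\vee}_{U_E}\Ad^{\ast\vee}_{Q_d^{-1}}P_d-\ad^{\ast\vee}_WP_E$. The first piece cancels the $P_d$-coupling in $F_E=\ad^{\ast\vee}_{U_E}\Ad^{\ast\vee}_{Q_d^{-1}}P_d+\Ad^{\ast\vee}_{Q_d^{-1}}(\tau-\tau_d,f-f_d)$. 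The second piece survives, so the closed-loop error input is $F_E=-\ad^{\ast\vee}_WP_E+F_{\mathrm{PD}}$; for instance $f_E=(R_d\omega_d)\times p_E-K^x_dv_E-R_E^\top K^x_px_E$. Since $\ad^{\ast\vee}=(\ad^\vee)^\top$, the surviving piece contributes $U_E^\top(-\ad^{\ast\vee}_WP_E)=-P_E^\top\ad^\vee_WU_E$ to $\dot H_E$. This exactly annihilates the power injected by the time-varying inertia.

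The $x_d\times(\cdot)$ term in $\tau$ has a separate role. It removes the lever arm $x_d\times R_d(\cdot)$ that $\Ad^{\ast\vee}_{Q_d^{-1}}$ attaches to the translational PD terms. Hence $F_{\mathrm{PD}}=-\diag(K^R_d,K^x_d)U_E-(\mathbb{P}_{\so(3)}(K^R_pR_E)^\vee,R_E^\top K^x_px_E)$, with no residual $x_d$ and no $\Ad^\ast$-conjugated damping. This cancellation is what the paper's long expansion in $\tilde\omega,\tilde v,\tilde\pi,\tilde p$ verifies, and it is the heart of the argument, not bookkeeping.

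\textbf{The fallback does not rescue the argument.} Letting the damping dominate does not prove the theorem as stated. $U_E^\top\binert\ad^\vee_WU_E$ is a sign-indefinite quadratic form in $U_E$ whose coefficient grows with $\|U_d\|$ and $\|x_d\|$. Dominating it would need a bounded reference and lower bounds on the gains, neither of which is a hypothesis. Once the cancellation is in place, the rest of your outline follows the paper:
\begin{itemize}
\item the potential terms cancel the proportional parts;
\item LaSalle--Yoshizawa/Barbalat then gives convergence of the error;
\item $\xi\to\xi_d$ is recovered from \eqref{eq:err_states} using compactness of $\SO(3)$ and boundedness of $x_d$.
\end{itemize}

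\textbf{Two smaller points.}
\begin{itemize}
\item $-\tr(K^R_pR_E\omega_E^\times)=2\,\omega_E^\top\mathbb{P}_{\so(3)}(K^R_pR_E)^\vee$, so the attitude potential needs a factor $\tfrac12$ for the cross terms to cancel exactly.
\item Your ``almost-global'' conclusion is unsupported. For a time-varying closed loop, the other critical points of $\tr(K^R_p(\Id-R_E))$ being saddles does not by itself give convergence for almost every initial condition. The paper only claims convergence from a neighbourhood of the identity.
\end{itemize}
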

\begin{proof}

Consider the candidate Lyapunov function given by
\begin{align*}
    \calL_E &= H_E +  \tr(K^R_p (\Id - R_E)) +  x_E^\top K^x_p  x_E\\
            &= \frac{1}{2}\pi_E^\top \omega_E + \frac{1}{2} p_E^\top v_E +  \tr(K^R_p (\Id - R_E)) +  x_E^\top K^x_p  x_E,
\end{align*}
where $H_E \coloneqq \frac{1}{2}\pi_E^\top \omega_E + \frac{1}{2} p_E^\top v_E$ is the ``error kinetic energy'' defined in \eqref{eq:error_hamiltonian} and $\tr(K_R(\Id - R_E)) +  x_E^\top K_x x_E$ is a potential energy term with minimum at $R_E = \Id, x_E = 0$.
We will consider the time derivatives of the kinetic energy and potential energy terms separately.  
Taking the time derivative of $H_E$,
\begin{align*}
    \dot{H}_E &= \ddt \frac{1}{2} P_E^\top \binert^{-1} P_E\\
    &= P_E^\top \binert^{-1} \dot{P}_E + \frac{1}{2}P_E^\top (\ddt \binert^{-1}) P_E\\
    &= U_E^\top (\ad^\ast_{U_E} P_E + \tau_E) + \frac{1}{2}P_E^\top (\ddt \binert^{-1}) P_E\\
    &= U_E^\top \tau_E + \frac{1}{2}P_E^\top (\ddt \binert^{-1}) P_E.
\end{align*}

Using the definition
$\binert \coloneqq \Ad^{\ast \vee}_{Q^{-1}_d} \inert \Ad^{\vee}_{Q^{-1}_d}$ \eqref{eq:binert_definition} and the identities \eqref{eq:ad_ids}, one writes 
\begin{align*}
    \ddt \binert^{-1} &= (\ddt \Ad^{\vee}_{Q_d}) \inert^{-1} \Ad^{\ast \vee}_{Q_d} + \Ad^{\vee}_{Q_d} \inert^{-1} (\ddt \Ad^{\ast \vee}_{Q_d})\\
    &= \ad^{\vee}_{\Ad_{Q_d V_d}} \Ad^{\vee}_{Q_d} \inert^{-1} \Ad^{\ast \vee}_{Q_d}\\
    &\qquad + \Ad^{\vee}_{Q_d} \inert^{-1}  \Ad^{\ast \vee}_{Q_d} \ad^{\ast \vee}_{\Ad^{\vee}_{Q_d V_d}}\\
&= \ad^{\vee}_{\Ad^{\vee}_{Q_d V_d}} \binert^{-1} + \binert^{-1} \ad^{\ast \vee}_{\Ad^{\vee}_{Q_d V_d}},
\end{align*}
and hence 
\begin{align*}
    P_E^\top (\ddt \binert^{-1}) P_E &= P_E^\top (\ad^{\vee}_{\Ad^{\vee}_{Q_d V_d}} \binert^{-1} + \binert^{-1} \ad^{\ast \vee}_{\Ad^{\vee}_{Q_d V_d}}) P_E\\
&= 2 P_E^\top \ad^{\vee}_{\Ad^{\vee}_{Q_d V_d}} U_E.
\end{align*}
Thus, one has 
\begin{align*}
    \dot{H}_E &= V^\top_E F_E + P_E^\top \ad^{\vee}_{\Ad^{\vee}_{Q_d V_d}} U_E\\
    &= \omega^\top_E \tau_E + v^\top_E f_E + \pi^\top_E R_d \omega^\times_d R_d^\top \omega_E\\
&\qquad + p^\top_E ((x_d \times R_d \omega_d) \times \omega_E)\\
&\qquad + p^\top_E R_d v^\times_d R^\top_d \omega_E + p^\top_E R_d \omega^\times_d R^\top_d v_E.
\end{align*}
Expanding the definitions of the error terms yields the rather unwieldy expression
\begin{align*}
    \dot{H}_E &= \omega^\top_E (- \omega_E \times R_d \pi_d)\\
    & \qquad + \omega^\top_E [-(R_d (v - v_d)) \times R_d p_d + R_d (\tau - \tau_d)]\\
    & \qquad + v^\top_E (-\omega_E \times R_d p_d + R_d (f - f_d))\\
    &\qquad + \pi^\top_E R_d \omega^\times_d R_d^\top \omega_E  + p^\top_E ((x_d \times R_d \omega_d) \times \omega_E)\\
&\qquad + p^\top_E R_d v^\times_d R^\top_d \omega_E + p^\top_E R_d \omega^\times_d R^\top_d v_E\\
&= \omega^\top_E [-(R_d (v - v_d)) \times R_d p_d + R_d (\tau - \tau_d)]\\
    &\qquad + v^\top_E (-\omega_E \times R_d p_d + R_d (f - f_d))\\
&\qquad + \pi^\top_E R_d \omega^\times_d R_d^\top \omega_E + p^\top_E ((x_d \times R_d \omega_d) \times \omega_E)\\
&\qquad + p^\top_E R_d v^\times_d R^\top_d \omega_E + p^\top_E R_d \omega^\times_d R^\top_d v_E\\
&= \tilde{\omega}^\top [-\tilde{v} \times p_d + \tilde{\tau}]\\
    &\qquad + ((R^\top_d x_d) \times \tilde{\omega} + \tilde{v})^\top (-\tilde{\omega} \times p_d + \tilde{f})\\
&\qquad + (\tilde{\pi} + (R^\top_d x_d) \times \tilde{p})^\top \omega_d \times \tilde{\omega}\\
&\qquad + \tilde{p}^\top (((R^\top_d x_d) \times \omega_d) \times \tilde{\omega})\\
&\qquad + \tilde{p}^\top v^\times_d \tilde{\omega} + \tilde{p}^\top \omega^\times_d ((R^\top_d x_d) \times \tilde{\omega} + \tilde{v}).
\end{align*}

Substituting in the control laws \eqref{eq:tau_input} and \eqref{eq:f_input} one computes 
\begin{align}
\dot{H}_E &= \tilde{\omega}^\top [-\tilde{v} \times p_d +  \tilde{\omega} \times \pi_d + \omega_d \times \tilde{\pi} + \tilde{v} \times p_d + v_d \times \tilde{p} \notag\\
    & \qquad - R_d^\top K^R_d \omega_E - R_d^\top \mathbb{P}_{\so(3)}\left(K^R_p R_E\right)^\vee \notag\\
    & \qquad + (R_d^\top x_d) \times (R_d^\top K^x_d v_E + R^\top K^x_p x_E)]\notag\\
    &\quad + ((R^\top_d x_d) \times \tilde{\omega} + \tilde{v})^\top [-\tilde{\omega} \times p_d + \tilde{\omega} \times p_d + \omega_d \times \tilde{p}\notag\\
    &\qquad \qquad - R_d^\top K^x_d v_E - R^\top K^x_p x_E]\notag\\
&\quad + (\tilde{\pi} + (R^\top_d x_d) \times \tilde{p})^\top \omega_d \times \tilde{\omega}\notag\\
&\quad + \tilde{p}^\top [((R^\top_d x_d) \times \omega_d) \times \tilde{\omega}]\notag\\
&\quad + \tilde{p}^\top v^\times_d \tilde{\omega}  + \tilde{p}^\top \omega^\times_d ((R^\top_d x_d) \times \tilde{\omega} + \tilde{v}) \notag\\
&= - \omega^\top_E K^R_d \omega_E - v^\top_E K^x_d v_E\notag\\
&\qquad - \omega^\top_E \mathbb{P}_{\so(3)}\left(K^R_p R_E\right)^\vee - v^\top_E R_E^\top K^x_p x_E. \label{eq:dot_H_id}
\end{align}
That is, the change in kinetic energy comprises two dissipation terms and two coupling terms.
Note that the time derivative of the potential terms in $\calL_E$ are 
\begin{align}
    \ddt &\left(\tr(K^R_p (\Id - R_E)) +  \frac{1}{2}x_E^\top K^x_p  x_E \right) \notag\\
    &= -\tr(K^R_p  R_E \omega^\times_E) + x_E^\top K^x_p  R_E v_E \notag\\
    &= \omega^\top_E \mathbb{P}_{\so(3)}\left(K^R_p  R_E\right)^\vee+ x_E^\top K^x_p  R_E v_E \label{eq:dot_potential},
\end{align}
where the last line follows from the identity \eqref{eq:trace_id}.
Adding together expressions \eqref{eq:dot_H_id} and \eqref{eq:dot_potential}, yields 
\begin{align*}
    \dot{\calL}_E &= - \omega^\top_E K^R_d \omega_E - v^\top_E K^x_d v_E. 
\end{align*}
It follows that $\calL_E$ is monotonically non-increasing along trajectories of the system. 

The inertia $\binert$ is uniformly bounded above and below, so by the Lasalle-Yoshizawa theorem \citep[Theorem 8.4]{khalilNonlinearSystems2002}, there exists a compact set $\calB$ containing the equilibrium point $((\Id, 0), (0,0))$, such that if $\xi_E(0) \in \calB$, then $\xi_E(t)$ exists for all time, $\xi_E(t) \in \calB$ and $\dot{\calL} \to 0$.
Thus, $\omega_E\ \to 0$, $v_E \to 0$ and hence $\pi_E \to 0$, $p_E \to 0$.
By Barbalat's Lemma \citep[Lemma 8.2]{khalilNonlinearSystems2002}, this implies that $\dot{\pi}_E \to 0$,  $\dot{p}_E \to 0$, and from the error dynamics \eqref{eq:err_dynamics}, the definition of the input errors \eqref{eq:err_inputs} and the closed loop inputs \eqref{eq:f_input}, \eqref{eq:tau_input}, this implies $\mathbb{P}_{\so(3)}(K^R_p R_E) \to 0$ and $x_E \to 0$.
In particular, $\xi_E = ((\Id, 0), (0, 0))$ is an isolated equilibrium point, so there exists some neighborhood of $((\Id, 0), (0, 0))$ for which $\xi_E \to ((\Id, 0), (0, 0))$ asymptotically. 

To show that $R \to R_d$, note that because the Lie group $\SO(3)$ is compact, there exists some upper bound $L$ such that $\| R_d \| < L$ for all $R_d \in \SO(3)$.
Then $\| R - R_d \| = \| (R R^\top_d - \Id) R_d \| \leq L \| R_E - \Id \|$, so for a given $\varepsilon > 0$, if $\| R_E - \Id \| < \frac{1}{L} \varepsilon$,then $\| R - R_d \| < \varepsilon$.
Similarly, to show that $x \to x_d$ if $x_d$ is bounded, let $\|x_d \| < M$ and note that by the triangle inequality, we have
\begin{align*}
\|x - x_d \| &\leq \| x - R_E x_d \| + \| R_E x_d - x_d \|\\
& \leq \| x_E \| + M \|R_E - \Id \|.    
\end{align*}

Thus, $x$ can be made arbitrarily close to $x_d$ by making $x_E$ sufficiently close to $0$ and $R_E$ sufficiently close to $\Id$.
Similar arguments hold for $\pi \to \pi_d$ and $p \to p_d$.
\hfill$\Box$
\end{proof}

\begin{remark}
\label{remark:pd_law}
The control law \eqref{eq:tau_input}, \eqref{eq:f_input} is simply a geometric PD-control law formed in the error coordinates described in \eqref{eq:err_inputs}, with an additional term that compensates for the time variation in $\binert^{-1}$.
\end{remark}

%%%%%%%%%%%%%%%%%%%%%%%%%%%%%%%%%%%%%%%%%%%%%%%%%%%%%%%%%%%%%%%%%%%%%%%%%%%%%%
\section{Simulation}
\label{sec:simulation}

To empirically verify the approach, we perform a simulation study in Python.
We simulate a simple tracking task for a BlueROV2 Heavy underwater submersible (Figure \ref{fig:bluerov2}).
\begin{figure}[!tb]
   \includegraphics[width=0.5\linewidth]{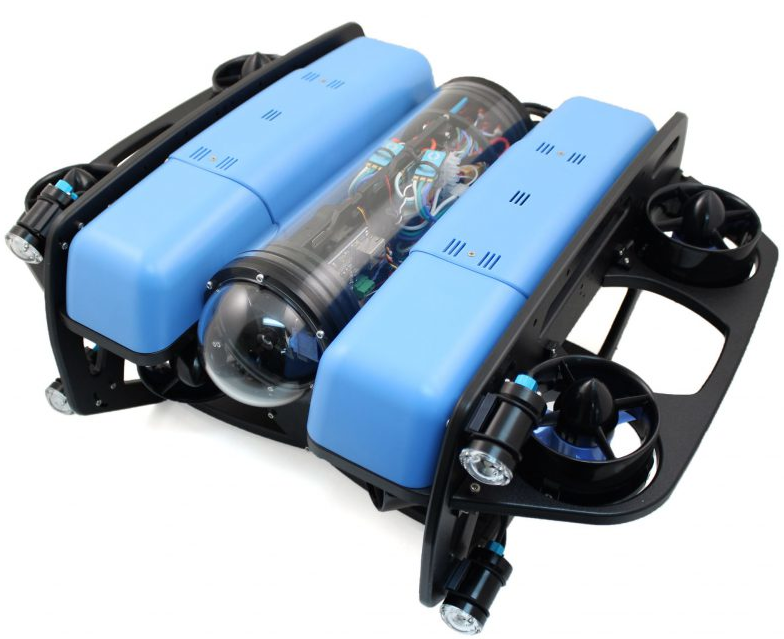}
   \centering
   \caption{BlueROV2 Heavy. Image sourced from \cite{blueroboticsBlueROV2HeavyKit2025}}
   \label{fig:bluerov2}
\end{figure}
The model parameters \citep{wu6DoFModellingControl2018} are given by the following:
\begin{align*}
J &= 0.28 \Id_3, D =  \begin{pmatrix} 0, 0, 0.23 \end{pmatrix}^\times, M = \diag(17, 24.2, 26.07).
\end{align*}

The trajectory is a helix of radius $2$ metres (Figure \ref{fig:traj_3d}b).
For the simulation, the initial position and attitude are perturbed by $x'_0 = x_0 + \begin{pmatrix} 1.0 &  1.5 & 0.8 \end{pmatrix}^\top$ and $R'_0 = \tR_0 \exp \left( \begin{pmatrix} 1.5 & 4.5 & 0.6 \end{pmatrix}^\times \right)$.
The Lyapunov function as a function of time is shown in Figure \ref{fig:traj_3d}a and the convergence of states is shown in Figures \ref{fig:traj_3d}b and \ref{fig:traj_momenta}.
The Lyapunov function is strictly non-increasing until approximately 10 seconds, where some small oscillation is observed - here, the magnitude of the Lyapunov function is of the order $10^{-5}$ and the authors believe this oscillation is due to numerical instability.

\begin{figure}[!tb]
\centering
    \subfloat[]{\includegraphics[width=0.47\linewidth]{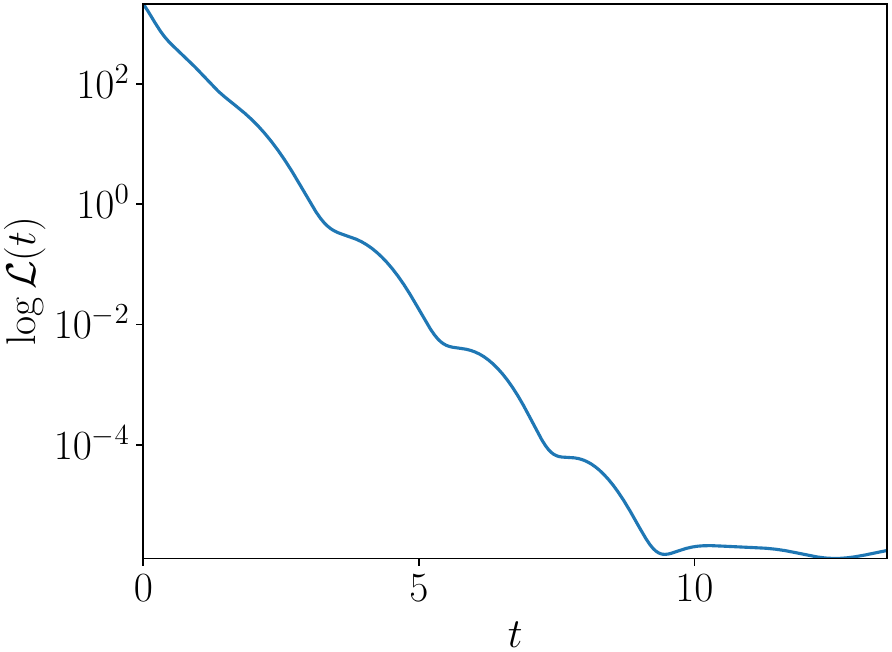}}%
    \quad
    \subfloat[]{\includegraphics[width=0.47\linewidth]{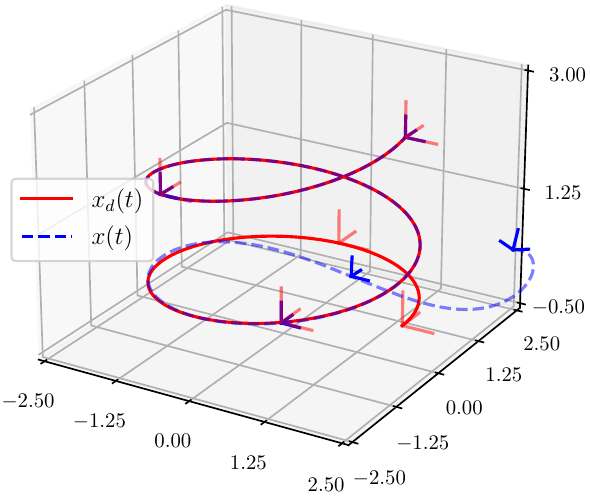}}%
   \caption{(a): The log Lyapunov function $\log \mathcal{L}(t)$ for the helix tracking simulation. (b): Desired position $x_d$ (red) vs actual position $x$ (blue-dashed) trajectories for $\SE(3)$ helix tracking simulation. The attitudes $R_d$ and $R$ are represented by the orthogonal frames superimposed on the trajectory.}
   \label{fig:traj_3d}
\end{figure}
\begin{figure}[!tb]
\centering
    \subfloat[]{\includegraphics[width=0.47\linewidth]{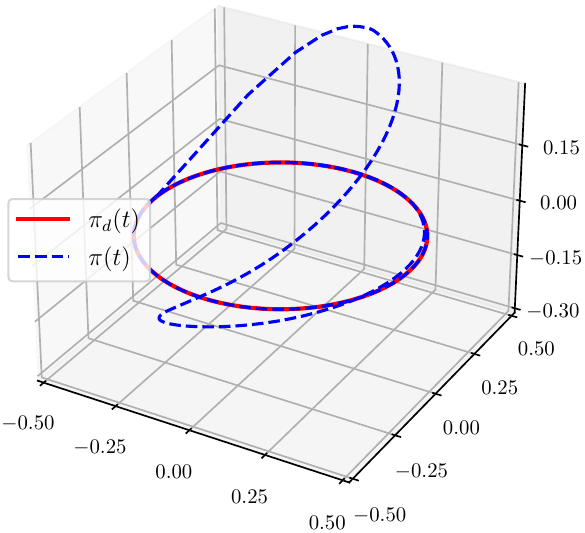}}%
    \quad
    \subfloat[]{\includegraphics[width=0.47\linewidth]{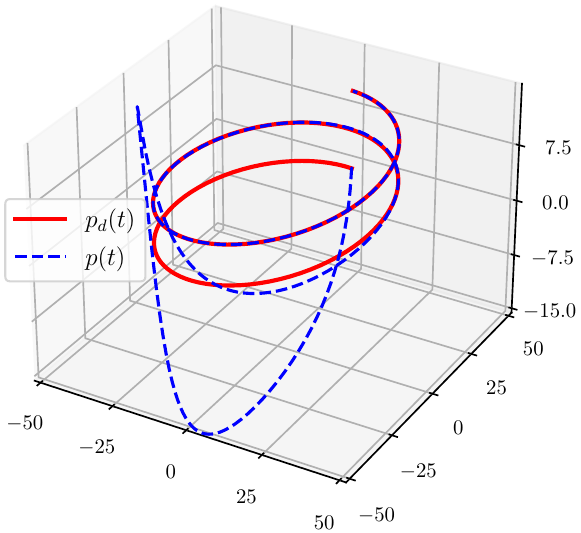}}%
   \caption{(a): Desired angular momentum $\pi_d$ (red) vs actual angular momentum $\pi$ (blue-dashed) trajectories for $\SE(3)$ helix tracking simulation. (b): Desired momentum $p_d$ (red) vs actual angular momentum $p$ (blue-dashed) trajectories for $\SE(3)$ helix tracking simulation.}
   \label{fig:traj_momenta}
\end{figure}

\section{Discussion}
\label{sec:discussion}
It is clear from Figures \ref{fig:traj_3d} and \ref{fig:traj_momenta} that the control causes the perturbed system to correctly track the desired trajectory.
The main benefit to our approach is the ability to treat the error system as a de facto Hamiltonian system, leveraging its natural passivity properties.
This simplifies the control design significantly, as it can be treated as a PD control in the error coordinates (see Remark \ref{remark:pd_law})
However, it is not without downsides - the main one being that the right-invariant Lie group error requires the introduction of an inertial frame.
This can be clearly seen in \eqref{eq:err_states}, where the velocity and input terms are inertial frame quantities.
This introduces a dependence on the desired frame into the control, cf.~\eqref{eq:tau_input}, which manifests as a lever arm coupling together the rotational and translational control terms.
Future work includes exploring this further and investigating ways to mitigate the effect of these terms in the closed-loop control. 
Additionally, the authors plan to consider relaxation of the assumption of coincident centres of gravity and buoyancy by exploiting advected parameter modelling \citep{contrerasControlledLagrangiansStabilization2022}. 

\section{Conclusion}
In this paper, we have addressed the tracking control problem for underwater vehicles on $\SE(3)$ with coincident centres of mass and buoyancy.
To do this, we extended the natural $\SE(3)$ symmetry of the configuration space to a semi-direct product symmetry on the phase space $\SE(3) \times \R^6$, and used the natural Lie group error of this symmetry to define an equivariant error. 
We showed that, with a transformation of the input, the resultant error system was itself a Lie-Poisson system, which motivates a natural Lyapunov function definition using the energy of the error system.
We provided a stabilising control for this system, proved asymptotic stability, and verified the approach in simulation, with simulation parameters taken from a real submersible.

\section*{Acknowledgement}
This research was supported by the Australian Research Council through Discovery Grant DP210102607 `` Exploiting the Symmetry of Spatial Awareness for 21st Century Automation''.

\bibliographystyle{plainnat}        % Include this if you use bibtex
\bibliography{references_archived}             % bib file to produce the bibliography
                                                     % with bibtex (preferred)
                                                   
%\begin{thebibliography}{xx}  % you can also add the bibliography by hand

%\bibitem[Able(1956)]{Abl:56}
%B.C. Able.
%\newblock Nucleic acid content of microscope.
%\newblock \emph{Nature}, 135:\penalty0 7--9, 1956.

%\bibitem[Able et~al.(1954)Able, Tagg, and Rush]{AbTaRu:54}
%B.C. Able, R.A. Tagg, and M.~Rush.
%\newblock Enzyme-catalyzed cellular transanimations.
%\newblock In A.F. Round, editor, \emph{Advances in Enzymology}, volume~2, pages
%  125--247. Academic Press, New York, 3rd edition, 1954.

%\bibitem[Keohane(1958)]{Keo:58}
%R.~Keohane.
%\newblock \emph{Power and Interdependence: World Politics in Transitions}.
%\newblock Little, Brown \& Co., Boston, 1958.

%\bibitem[Powers(1985)]{Pow:85}
%T.~Powers.
%\newblock Is there a way out?
%\newblock \emph{Harpers}, pages 35--47, June 1985.

%\bibitem[Soukhanov(1992)]{Heritage:92}
%A.~H. Soukhanov, editor.
%\newblock \emph{{The American Heritage. Dictionary of the American Language}}.
%\newblock Houghton Mifflin Company, 1992.

%\end{thebibliography}

\appendix
% \section{A summary of Latin grammar}    % Each appendix must have a short title.
% \section{Some Latin vocabulary}              % Sections and subsections are supported  
%                                                                          % in the appendices.
\end{document}